\documentclass[letterpaper, 10 pt, conference]{ieeeconf}  

\IEEEoverridecommandlockouts                              
\usepackage{amsmath,amsfonts,amssymb, amsthm}
\usepackage{cuted}
\usepackage{xcolor}
\usepackage{graphicx}
\usepackage{parskip}
\usepackage{cite}
\usepackage{booktabs}
\usepackage{subcaption}

\usepackage{etoolbox}
\patchcmd{\thebibliography} 
  {\settowidth}
  {\setlength{\parsep}{0pt}\setlength{\itemsep}{0pt plus 0.1pt}\settowidth}
  {}{}

\newtheorem{lemma}{Lemma}
\newtheorem{theorem}{Theorem}

\newtheorem{assumption}{Assumption}
\newtheorem{remark}{Remark}

\renewcommand{\Re}{\mathbb{R}}

\renewcommand{\t}{^{\mbox{\tiny\sf T}}}

\renewcommand{\baselinestretch}{0.979}

\title{\LARGE \bf
Data-Driven Covariance Steering with Output Feedback
}

\author{Dimitrios Moustroufis$^{1}$ and Panagiotis Tsiotras$^{2}$
\thanks{$^{1}$ Ph.D. student, School of Aerospace Engineering, Georgia Institute of Technology, Atlanta, GA, 30332, USA. Email:
 {\small dimoustroufis@gatech.edu}}%
\thanks{$^{2}$ David and Andrew Lewis Chair and Professor, School of Aerospace Engineering, and Institute for Robotics and Intelligent Machines, Georgia Institute of Technology, Atlanta, GA, 30332, USA. Email:
 {\small tsiotras@gatech.edu}}%
}

\begin{document}

\maketitle
\thispagestyle{empty}
\pagestyle{empty}

\begin{abstract}
This paper addresses output-feedback covariance steering for stochastic discrete-time linear time-invariant systems with unknown dynamics.
We construct a controllable non-minimal state representation from past inputs and outputs, allowing the problem to be formulated in a standard state-feedback setting.
The induced disturbance, however, is temporally correlated, requiring propagation of the state-disturbance cross-covariance.
Using persistently exciting offline data, we employ an indirect approach for mean steering and a direct approach for covariance steering.
The indirect formulation requires estimation of the mean dynamics, whereas the direct formulation requires estimation of the noise realization.
We develop estimation methods suitable for temporally correlated noise and formulate the resulting covariance steering problem as a convex semidefinite program.
Numerical simulations demonstrate the effectiveness of the proposed framework.
\end{abstract}



\section{Introduction}
The increasing complexity of modern engineering systems, along with the wide availability of sensor data, have motivated recent research around data-driven control methods.
These methods either exploit data to obtain a model of the system dynamics to be used for model-based control design, or compose a control law directly from data.
These methods are classified as ``indirect" and ``direct," respectively.
Under the assumption of a linear, time-invariant system with no noise, the seminal work by Willems et al. \cite{willems_note_2005} proved that future trajectories of a system can be represented by a set of persistently exciting input-output data.
This result, often referred to as the ``Willems' Fundamental Lemma" (WFL) has been exploited for data-driven simulation \cite{markovsky_data-driven_2008}, for the design of feedback stabilizing controllers \cite{de_persis_formulas_2020, dai_data-driven_2023}, predictive controllers \cite{coulson_data-enabled_2019}, \cite{berberich_data-driven_2021}, as well as for data-enabled policy optimization algorithms \cite{zhao_data-enabled_2024}, \cite{zhao_direct_2025}.
These algorithms have been applied to control quadrotors, power networks, and synchronous motor drives \cite{elokda_dataenabled_2021, huang_decentralized_2022, carlet_data-driven_2022}.
Willems' result has been extended to data collected from multiple trajectories \cite{van_waarde_willems_2020}, while robustifying regularizations \cite{coulson_regularized_2019} have been employed to handle noise and nonlinearities in the data. 
Finally, the connection between indirect and direct methods based on the WFL has also been investigated in the literature~\cite{dorfler_bridging_2023}.

Covariance Steering (CS) is a new control design paradigm that explicitly accounts for uncertainty in the dynamics or the initial conditions by steering the entire distribution of the system state to some specified distribution, within a finite horizon.
For a known system model, the optimal solution of the CS problem, along with its formulation as a semi-definite program (SDP), has been established in the literature both for discrete-time \cite{bakolas_finite-horizon_2018, liu_optimal_2024, rapakoulias_discrete-time_2023, balci2022exact}, and for continuous-time systems \cite{chen2015optimal}.
 The CS problem for partially observed systems has also been recently addressed in the literature \cite{pilipovsky_computationally_2024}, \cite{maity_optimal_2025}.

Addressing the lack of a system model, the work in~\cite{pilipovsky_data-driven_2023} formulated the CS problem for deterministic systems in a data-driven fashion by exploiting the WFL, while \cite{pilipovsky_dust_2026} extended these results for systems with stochastic process noise.
In these works, an indirect approach is employed for the mean steering problem, while the covariance steering problem is formulated using a direct approach. 
A major standing assumption of the current data-driven covariance steering literature is the availability of full-state measurements without measurement noise. 
This, however, may not be feasible in many practical applications, thereby motivating the present work.

The main contribution of this paper is to extend the state-feedback, data-driven covariance steering results \cite{pilipovsky_data-driven_2023, pilipovsky_dust_2026}, to the output-feedback case.
By exploiting a non-minimal state representation \cite{de_persis_formulas_2020}, we utilize state-feedback ideas to solve the data-driven covariance steering problem. 
However, in this representation, the disturbances are temporally correlated, which complicates model estimation for the indirect mean controller design and noise estimation for the direct covariance controller design.
Hence, as a by-product, this work also addresses the discrete-time state-feedback covariance steering problem under temporally correlated noise, which, to the best of the authors' knowledge, has not been addressed in the literature. 

\section{Problem Formulation}

\subsection{Notation}

We use the operator $\mathrm{col}(a,b) \triangleq [a\t\ b\t]\t$
To denote a stacked vector composed of the column vectors $a,b$. 
Considering the $(T+1)$-long vector sequence $(a_0, a_1, \ldots, a_{T})$, we denote its Hankel matrix of depth $L$ and length $T$ with $\mathcal{H}_{i,L,T}(a)$. 
The $j$-th column of $\mathcal{H}_{i,L,T}(a)$ is $\mathrm{col}(a_{i+j}, \ldots, a_{i+j+L-1})$, $j \in [0,\ T-1]$. 
We also denote with $\mu^{(\cdot)}_k$ and $\Sigma^{(\cdot)}_k$ the mean and covariance of the random variable in the superscript at some time $k$. 
For a symmetric positive (semi-)definite matrix $A$ we use the notation $A \succ 0$ ($\succeq 0$), and we denote the Moore-Penrose pseudo-inverse of $A$, as $A^\dagger$.

\subsection{Problem Statement}

Consider the stochastic, discrete-time LTI system
\begin{equation}
    \begin{aligned}\label{eq:x_dynamics}
        x_{k+1} &= Ax_k + Bu_k + w_k,\\
        y_k &= Cx_k + q_k,
    \end{aligned}
\end{equation}
where $x_k,w_k\in \mathbb{R}^n$, $y_k, q_k\in\mathbb{R}^p$, $u_k\in \mathbb{R}^m$ and $w_k,q_k$ are identically distributed samples from Gaussian noise processes that represent the process and measurement noise, respectively.
While $w,q$ can be correlated in time (i.e., ``colored"), they must be stationary and mutually independent.
The data-driven output distribution steering problem is posed as steering the output from some initial distribution $\mathcal{N}(\mu_i^y, \Sigma_i^y)$ to a final distribution $\mathcal{N}(\mu_f^y, \Sigma_f^y)$ over a specified horizon $N$, while minimizing some cost index, without knowledge of the model matrices $A,B,C$, but rather utilizing only $T+1$ input-output data pairs from a single trajectory.
%
%
%

\subsection{Non-minimal State Space Representation}

We define a non-minimal state space representation as
\begin{align}
    z_k \triangleq \mathrm{col}(u_{k-1}, \ldots, u_{k-\ell}, y_{k-1}, \ldots, y_{k-\ell}) \in \mathbb{R}^h,
\end{align}
where $\ell$ denotes the observability index of the system, i.e, the smallest integer such that the observability matrix $\mathcal{O} = [(CA^{\ell-1})\t\ \cdots\ (CA)\t\quad C\t]\t$ has rank $n$, and $h=\ell(m+p)$. 
In case $\ell$ is not known, an upper bound of $\ell$ can be used.
The system order $n$ serves as the largest such upper bound, and it can be estimated from input-output measurements \cite{de_persis_formulas_2020}.
Define the vectors $\tilde{q}_k \triangleq \mathrm{col}(q_{k-1}, \ldots, q_{k-\ell})$ and $\tilde{w}_k \triangleq \mathrm{col}(w_{k-1}, \ldots, w_{k-\ell})$.
The dynamics of the non-minimal state $z_k$ can then be expressed as 
\begin{align} \label{eq:z_dynamics}
  z_{k+1}  &= A_zz_k + B_zu_k + \tilde{\xi}_k,\\
    y_k &= \mathcal{S}z_k + \tilde{\zeta}_k, \label{eq:z_dynamicsB}
\end{align}
where $\tilde{\xi}_k = D_z\tilde{\zeta}_k$, and $\tilde{\zeta}_k$ is given by
\begin{equation}\label{eq:d_expression}
    \begin{aligned}
        \tilde{\zeta}_k &\triangleq C(\mathcal{C}_w - A^{\ell}\mathcal{O}^\dagger\mathcal{T}_{w})\tilde{w}_k - CA^{\ell}\mathcal{O}^\dagger\tilde{q}_k + q_k \\
            &= F_w \tilde{w}_k + F_q\tilde{q}_k + q_k.
    \end{aligned}
\end{equation}
and where $A_z, B_z, D_z, \mathcal{S}, \mathcal{C}_w, \mathcal{T}_w$ are provided in the Appendix.

\begin{lemma}\label{lem:d_noise_correlation}
 The samples $\tilde{\zeta}_k$
 are zero-mean, Gaussian, identically distributed, but potentially correlated, even for i.i.d. processes $w, q$.
 In particular, assuming that $w_k, q_k$ are i.i.d., for the time points $k,m$, the vectors $\tilde{\zeta}_k$ and $\tilde{\zeta}_m$ can be correlated if $|m-k| \le \ell$, and they are independent otherwise.
\end{lemma}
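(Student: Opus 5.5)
Since $\tilde{\zeta}_k$ is built linearly from Gaussian noise samples, the plan is to work directly with the definition \eqref{eq:d_expression} and treat it as a fixed linear map of the underlying noise. Concretely, collect the relevant samples into
\[
\eta_k \triangleq \mathrm{col}(w_{k-1},\ldots,w_{k-\ell},q_k,q_{k-1},\ldots,q_{k-\ell}),
\]
so that $\tilde{\zeta}_k = [F_w\ \ F_q\ \ I]\,\eta_k \triangleq G\eta_k$. The matrix $G$ is constant in $k$, since $F_w$ and $F_q$ depend only on $A$, $C$ and $\ell$.

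\textbf{Step 1 (Gaussianity, zero mean, identical distribution).} Because $w$ and $q$ are jointly Gaussian, stationary and mutually independent, $\eta_k$ is Gaussian. Its mean is zero, taking the noise processes as zero-mean as is standard. Its covariance $\Sigma^{\eta}$ is built from the autocovariances $\mathbb{E}[w_{k-i}w_{k-j}\t]$ and $\mathbb{E}[q_{k-i}q_{k-j}\t]$, with zero cross blocks between $w$ and $q$. By stationarity these depend only on $i-j$, so $\Sigma^{\eta}$ is independent of $k$. A linear image of a Gaussian vector is Gaussian, so $\tilde{\zeta}_k\sim\mathcal{N}(0,G\Sigma^{\eta}G\t)$ for every $k$. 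This gives zero mean, Gaussianity and identical distribution, including the colored case.

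\textbf{Step 2 (i.i.d. case: the correlation window).} Assume $w,q$ are i.i.d. Take $m=k+d$ with $d\ge 0$ without loss of generality. The cross-covariance $\mathbb{E}[\tilde{\zeta}_k\tilde{\zeta}_m\t]$ expands into terms $\mathbb{E}[w_{k-i}w_{m-j}\t]$ and $\mathbb{E}[q_{k-i}q_{m-j}\t]$; the mixed $w$–$q$ terms vanish by independence.
\begin{itemize}
\item For $|d|>\ell$, the sample windows $\{w_{k-\ell},\ldots,w_{k-1}\}$ and $\{w_{m-\ell},\ldots,w_{m-1}\}$ are disjoint.
\item Likewise $\{q_{k-\ell},\ldots,q_k\}$ and $\{q_{m-\ell},\ldots,q_m\}$ are disjoint, since the window has length $\ell+1$ and needs $d\ge \ell+1$.
\end{itemize}
So $\eta_k$ and $\eta_m$ are functions of disjoint sets of independent variables and are therefore independent. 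Hence $\tilde{\zeta}_k$ and $\tilde{\zeta}_m$ are independent, not merely uncorrelated.

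For $|d|\le\ell$ the windows overlap, and the covariance equals a sum of products of blocks of $F_w$, $F_q$, $I$ with $\Sigma^w$ or $\Sigma^q$. For example, the $q_m$ sample with $m-k=d\le\ell$ appears both in $\tilde{q}_m$ (via $F_q$) and in $\tilde{q}_k$ or directly. This sum is generically nonzero, which justifies ``can be correlated'' without claiming it always is.

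\textbf{Main obstacle.} The delicate part is the boundary bookkeeping at $|d|=\ell$ versus $\ell+1$. The extra $q_k$ term extends the $q$-window to length $\ell+1$, and this is exactly what makes the threshold $\ell$, rather than $\ell-1$, which is what the $w$-windows alone would give. The claim only asserts possible correlation inside the window, so no nondegeneracy argument on $F_w,F_q$ is required. If desired, one can give an explicit nonzero term: at $d=\ell$, the term $\mathbb{E}[q_k q_k\t]$ appears as $I\cdot\Sigma^q\cdot(\text{last block of }F_q)\t$, and it is nonzero whenever $CA^\ell\mathcal{O}^\dagger$ has a nonzero last block.
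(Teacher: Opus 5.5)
Your proposal is correct and follows essentially the paper's argument. You write $\tilde\zeta_k$ as a fixed linear image of zero-mean Gaussian noise and kill the mixed $w$--$q$ terms of $\mathbb{E}[\tilde\zeta_k\tilde\zeta_m\t]$ by independence. You then read off the window from the overlap of the $w$-samples (length $\ell$) and the $q$-samples (length $\ell+1$ because of the extra $q_k$). Your version is somewhat more complete than the paper's in three places: stationarity for identical distribution, the disjoint-sample argument giving genuine independence for $|m-k|>\ell$, and the explicit term $\Sigma^q(F_q^{(\ell)})\t$ at $|m-k|=\ell$, where $F_q^{(\ell)}$ is the last block of $F_q$.

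Two small index slips need fixing. With your ordering of $\eta_k$, the map should be $[F_w\ \ I\ \ F_q]$. In Step 2, it is $q_k$ (not $q_m$) that appears both directly in $\tilde\zeta_k$ and, via $F_q$, in $\tilde q_m$.
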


\textit{Proof.} See the Appendix. 
\qed

It has been shown in \cite{alsalti_notes_2025} that, for $p\ell>n$ and for systems without noise, the system \eqref{eq:z_dynamics} is not controllable. 
Moreover, the matrix $\begin{bmatrix} \mathcal{H}_{0,1,T}(u)\t \mathcal{H}_{0,1,T}(z)\t \end{bmatrix}\t$ does not have full row rank, thus violating a fundamental condition of the data-driven control methods used in the sequel. 
Following~\cite{alsalti_notes_2025}, we employ the reduced non-minimal state
\begin{equation}\label{eq:chi_definition}
\chi_k \triangleq L z_k \in \mathbb{R}^r,
\end{equation}
where $L \in \Re^{r \times h}$ removes the linearly dependent components of $z_k$. 
Details on computing $L$ from noisy data are presented in Section~\ref{sec:data_driven_system_representation}. 
The dynamics of $\chi_k$ are then written as
\begin{subequations} \label{eq:chi_dynamics}
    \begin{align}
        \chi_{k+1} &= \mathcal{A}\chi_k + \mathcal{B}u_k + \mathcal{D}\zeta_k = \mathcal{A}\chi_k + \mathcal{B}u_k + \xi_k, \\
        y_k &= \mathcal{C} \chi_k + \zeta_k,
    \end{align}
\end{subequations}
where $\xi_k \triangleq \mathcal{D}\zeta_k$, and $\mathcal{A},\mathcal{B}, \mathcal{C}, \mathcal{D}$ will be estimated from data.
The noise $\zeta_k$ is temporally correlated, as is $\tilde{\zeta}_k$.

\subsection{Non-minimal State Distribution Steering Problem}\label{sec:nm_distribution_steering_problem}

In the absence of probabilistic chance constraints, we can separate the distribution steering problem for the state $\chi_k$ into the mean and covariance steering problems \cite{okamoto_optimal_2018}.
Moreover, we consider a controller of the form 
\begin{equation}  \label{eq:control}
u_k = K_k(\chi_k -\mu_k) + v_k + \nu_k,
\end{equation}
where $v_k$ is a feed-forward term, $K_k$ is a feedback gain matrix, and $\nu_k~\sim \mathcal{N}(0, \Sigma^\nu_k)$ is an independent random term.
As shown in \cite{liu_optimal_2024} this class of controllers includes the optimal one for distribution steering problems.
Defining $\mu_k\triangleq \mathbb{E}[\chi_k]$ and $\Sigma_k \triangleq \mathbb{E}[(\chi_k-\mu_k)(\chi_k-\mu_k)\t]$, as well as $\mu_k^y\triangleq \mathbb{E}[y_k]$ and $\Sigma_k^y \triangleq \mathbb{E}[(y_k-\mu_k^y)(y_k-\mu_k^y)\t]$, the mean output steering problem is posed as
\begin{subequations}\label{eq:mean_problem_chi_state}
\begin{equation}
    \begin{aligned}
        \min_{\mu_k, v_k} J = \sum_{k=0}^{N-1}(\mu_k - \chi_k^r)\t Q^\mu_k(\mu_k - \chi^r_k) + v_k\t R^\mu_k v_k \label{eq:mean_steering_cost},
    \end{aligned}
\end{equation}
such that, for all $k=0,1,\dots, N-1$,
\begin{align}
    &\mu_{k+1} = \mathcal{A}\mu_k + \mathcal{B}v_k, \label{eq:mean_dynamics_constraint}\\
    &\mu_0^y = \mu_i^y, \quad
    \mu_N^y = \mu_f^y \label{eq:init_term_mean_constraints},
\end{align}
\end{subequations}
where $v_k$ is a feedforward control input, $\chi_k^r$ is a reference value for $\chi_k$, and $Q^\mu_k \succeq 0$, $R^\mu_k \succ 0$ are weight matrices.

From the dynamics~\eqref{eq:chi_dynamics}, we notice that the state $\chi_k$ is correlated with the noise $\zeta_{k-1}$. 
However, from Lemma~\ref{lem:d_noise_correlation}, $\zeta_{k-1}$ is correlated with $\zeta_k$, thus making $\chi_k$ correlated with $\zeta_k$ as well. This means that the cross-covariance between $\zeta_k$ and $\chi_k$ will not be zero, and this has to be accounted for.

In this paper, we model $\zeta_k$ as the stationary AR(1) process 
\begin{align}
    \zeta_{k+1} &= \Psi \zeta_k + \eta_k, \label{eq:zeta_noise_dynamics}
\end{align}
where the spectral radius, $\rho(\Psi)$, satisfies $\rho(\Psi)<1$, and $\eta_k$ is a zero-mean, i.i.d. Gaussian process.
Since the process is stationary, the covariance of $\zeta$ is constant, that is $\Sigma^\zeta_k \equiv \Sigma^\zeta$.
The AR(1) assumption is commonly used to model colored noise in practice and, in our case, it enables tractable covariance propagation.
Using the control law~\eqref{eq:control}, the propagation equation for $\Sigma_k$  is written as
\begin{equation}\label{eq:chi_covar_propagation}
\begin{aligned}
    \Sigma_{k+1} &= \begin{bmatrix} \mathcal{B} & \mathcal{A} \end{bmatrix} \begin{bmatrix} K_k \\ I_r \end{bmatrix} \Sigma_k \begin{bmatrix} K_k \\ I_r \end{bmatrix}\t \begin{bmatrix} \mathcal{B} & \mathcal{A} \end{bmatrix}\t\\
    &+ \begin{bmatrix} \mathcal{B} & \mathcal{A} \end{bmatrix} \begin{bmatrix} K_k \\ I_r \end{bmatrix} \Sigma^{\chi\zeta}_k \mathcal{D}\t +
    (\Sigma^{\chi\zeta}_k \mathcal{D}\t)\t \begin{bmatrix} K_k \\ I_r \end{bmatrix}\t \begin{bmatrix} \mathcal{B} & \mathcal{A} \end{bmatrix}\t\\
    &+\mathcal{B}\Sigma^\nu_k\mathcal{B}\t + \mathcal{D} \Sigma^{\zeta} \mathcal{D}\t,
\end{aligned}
\end{equation}
where $\Sigma^{\chi\zeta}_{k} \triangleq \mathbb{E}[\chi_{k} \zeta_{k}\t]$ is propagated as
\begin{align}\label{eq:chi_xi_covar_propagation}
    \Sigma^{\chi\zeta}_{k+1} = \begin{bmatrix} \mathcal{B} & \mathcal{A} \end{bmatrix} \begin{bmatrix} K_k \\ I_r \end{bmatrix} \Sigma^{\chi\zeta}_k \Psi\t + \mathcal{D} \Sigma^\zeta \Psi\t.
\end{align}
The full derivations are provided in the Appendix.

Given the weight matrices $Q^\Sigma_k \succeq 0$ and $R^\Sigma_k\succ0$, the output covariance steering problem is posed as
\begin{equation}\label{eq:covar_problem_chi_state}
\begin{aligned}
    \min_{\Sigma_k, \Sigma^{\chi\zeta}_k, K_k, \Sigma^\nu_k} J = \sum_{k=0}^{N-1}&\mathrm{tr}(Q^\Sigma_k\Sigma_k) + \mathrm{tr}(R^\Sigma_k(K_k\Sigma_kK_k\t + \Sigma^\nu_k)),
\end{aligned}
\end{equation}
such that the covariance propagation equations \eqref{eq:chi_covar_propagation}, \eqref{eq:chi_xi_covar_propagation} are satisfied for all $k=0,1,\dots, N-1$, and the boundary conditions $\Sigma_0^y = \Sigma_i^y$, $\Sigma_N^y = \Sigma_f^y$, $\Sigma^{\chi\zeta}_0 = \Sigma^{\chi\zeta}_i$ are met.

%

\section{Data-Driven System Representation}\label{sec:data_driven_system_representation}

We formulate the mean steering problem using an indirect method and the covariance steering problem using a direct method.
For the former problem, an estimated model of the mean dynamics is required, while the latter problem requires an estimate of the noise realization in the data \cite{pilipovsky_data_driven_2024}.
To address these requirements, we present two methods to estimate the matrices $\mathcal{A},\mathcal{B}$ in \eqref{eq:chi_dynamics}.

\subsection{Persistency of Excitation and Multi-Output Systems}

Consider the data matrices constructed from a $(T+1)$-long trajectory, namely, $U_{0,T} \triangleq \mathcal{H}_{0,1,T}(u)$, $Z_{0,T} \triangleq \mathcal{H}_{0,1,T}(z)$, $Z_{1,T} \triangleq \mathcal{H}_{1,1,T}(z)$, $\mathcal{X}_{0,T} = LZ_{0,T}$, $\mathcal{X}_{1,T}=LZ_{1,T}$, $Y_{0,T} \triangleq \mathcal{H}_{0,1,T}(y)$, $\Xi_{0,T} \triangleq \mathcal{H}_{0,1,T}(\xi)$, $\mathcal{Z}_{0,T} \triangleq \mathcal{H}_{0,1,T}(\zeta)$.
    
\begin{theorem}\label{lem:WFL}
    (Willems' Fundamental Lemma \cite{willems_note_2005}, expressed for single-step predictions). 
    Letting $\zeta_k \equiv 0$ in the discrete-time LTI system \eqref{eq:z_dynamics}, and assuming that the system is controllable and that the sequence $U_{0,T}$ is persistently exciting of order $h+1$, then the matrix $S \triangleq \begin{bmatrix} U_{0, T}\t & Z_{0,T}\t \end{bmatrix}\t$ has full row rank, that is, $\mathrm{rank}(S) = m + h$.
\end{theorem}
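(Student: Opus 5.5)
We prove the statement with the classical left-kernel argument of Willems et al.~\cite{willems_note_2005}, specialized to one-step predictions. Throughout, we set $\zeta_k\equiv 0$, so that \eqref{eq:z_dynamics} becomes the deterministic system $z_{k+1}=A_zz_k+B_zu_k$ with $z_k\in\Re^h$, assumed controllable. We argue by contradiction: suppose $S$ is rank deficient, so that there is a nonzero $(\alpha,\beta)\in\Re^m\times\Re^h$ with $\alpha\t u_j+\beta\t z_j=0$ for all columns $j=0,\dots,T-1$. The goal is to produce from $(\alpha,\beta)$ a nonzero vector in the left kernel of $\mathcal{H}_{0,h+1,T-h}(u)$, contradicting persistency of excitation of order $h+1$. (We take the Hankel lengths so that this matrix is built from the same data; this is the standard bookkeeping.)

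\textbf{Step 1 (relating shifted states to the input Hankel matrix).} Iterating the dynamics gives, for $i=0,\dots,h$,
\[
z_{j+i}=A_z^{i}z_j+\textstyle\sum_{s=0}^{i-1}A_z^{i-1-s}B_zu_{j+s}.
\]
For any $i\le h$, the annihilating relation also holds at the shifted column $j+i$, provided $j+i\le T-1$. We restrict $j$ to $0,\dots,T-h-1$, which makes all shifts legitimate.

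\textbf{Step 2 (building the kernel vector).} Let $\pi(\lambda)=\lambda^{h}+a_{h-1}\lambda^{h-1}+\dots+a_0$ be the characteristic polynomial of $A_z$. By Cayley--Hamilton, $\pi(A_z)=0$. Form the combinations $w_i\t\triangleq(\alpha\t,\ \beta\t A_z^{i})$ and the linear functional
\[
\textstyle\sum_{i=0}^{h}a_i\,(\alpha\t u_{j+i}+\beta\t z_{j+i})=0,\qquad a_h=1.
\]
Substituting Step 1, the $z_j$ terms combine into $\beta\t\pi(A_z)z_j=0$. What remains is a linear form $\gamma\t\,\mathrm{col}(u_j,\dots,u_{j+h})=0$, valid for every $j$, with
\[
\gamma_{h}=\alpha,\qquad \gamma_s=a_s\alpha+\textstyle\sum_{i=s+1}^{h}a_iB_z\t(A_z^{i-1-s})\t\beta .
\]
Persistency of excitation of order $h+1$ then forces $\gamma=0$.

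\textbf{Step 3 (exploiting controllability; the main obstacle).} From $\gamma_h=\alpha=0$ we obtain $\beta\neq 0$, and the remaining equations $\gamma_s=0$ reduce to
\[
\beta\t\textstyle\sum_{i=s+1}^{h}a_iA_z^{i-1-s}B_z=0,\qquad s=h-1,\dots,0.
\]
We solve these by back-substitution. The case $s=h-1$ gives $\beta\t B_z=0$. Then $s=h-2$ gives $\beta\t(A_zB_z+a_{h-1}B_z)=0$, hence $\beta\t A_zB_z=0$. Continuing inductively yields $\beta\t A_z^{k}B_z=0$ for $k=0,\dots,h-1$. Thus $\beta$ annihilates the controllability matrix of $(A_z,B_z)$, so $\beta=0$ by controllability. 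This contradicts $(\alpha,\beta)\neq0$, and we conclude $\mathrm{rank}(S)=m+h$.

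The delicate points are keeping the triangular structure of the back-substitution straight, with the monic leading coefficient doing the work, and checking that the available number of columns suffices for persistency of excitation to apply. If the index bookkeeping becomes awkward, an alternative is to note that controllability lets us reach any $z$, and to argue with an arbitrary initial state instead.
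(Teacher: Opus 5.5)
Your proof is correct and follows the classical left-kernel argument of \cite{willems_note_2005}: shift the annihilating relation, use the characteristic polynomial of $A_z$ to eliminate $z_j$, apply persistency of excitation of order $h+1$, then back-substitute to get $\beta\t A_z^kB_z=0$ and conclude with controllability; the paper does not reprove this result but only cites that reference. The one blemish is cosmetic: the vectors $w_i$ defined in Step 2 are never used and can be dropped.
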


For the common case  $p\ell >n$, $S$ is rank deficient \cite{alsalti_notes_2025}.
To address this problem, following \cite{alsalti_notes_2025}, we have introduced the non-minimal state space representation $\chi_k$ in \eqref{eq:chi_definition}, where the matrix $L$ is such that $\begin{bmatrix} U_{0,T} \\ LZ_{0,T} \end{bmatrix}$ has full row rank equal to $m(\ell +1) +n$.
To compute $L$ from noisy data, we compute the singular value decomposition (SVD) of $S$ as
\begin{align}\label{eq:z_svd}
S = \begin{bmatrix}
    U_1 & U_2
\end{bmatrix} \begin{bmatrix}
    \Lambda_1 & 0 \\
    0 & \Lambda_2
\end{bmatrix} \begin{bmatrix}
    V_1\t \\ V_2\t
\end{bmatrix},
\end{align}
where the partitioning of the block matrices is such that $\Lambda_1$ contains the $r$ largest singular values. Ideally, $r=m\ell+n$, but if $n$ is unknown, we select the $r > m\ell$ singular values that are clearly distinguishable from the rest. 
The matrix $L$ is computed as $L=\Lambda_1^{-1}U_1\t$.
%
%
Finally, we define the data matrix
\begin{align}\label{eq:data_matrix_P}
    P \triangleq \begin{bmatrix}
        U_{0,T} \\
        \mathcal{X}_{0,T}
    \end{bmatrix} = \begin{bmatrix}
        U_{0,T} \\
        LZ_{0,T}
    \end{bmatrix},\ P\in \Re^p.
\end{align}
It can be easily shown that the collected data must satisfy the following equations
\begin{align}
    \mathcal{X}_{1,T} &= \begin{bmatrix}
        \mathcal{B} & \mathcal{A}
    \end{bmatrix}P + \Xi_{0,T} \label{eq:chi_data_dynamics},\\
    Y_{0,T} &= \mathcal{C} \mathcal{X}_{0,T} + \mathcal{Z}_{0,T} \label{eq:output_data_dynamics}.
\end{align}

\subsection{Total Least Squares Method}

The standard approach to solve the model estimation problem would be to exploit the ordinary least squares (LS) method with $P$ in \eqref{eq:chi_data_dynamics} as the regressors and $\mathcal{X}_{1,T}$ as the observations.
However, for our problem, uncertainty exists both in the regressor matrix and in the observations, thus making the Total Least Squares (TLS) \cite{van_huffel_total_2004} approach more suitable than standard LS. 
The TLS problem is formulated as
\begin{subequations}
    \begin{align*}
    \min_{\mathcal{A}, \mathcal{B}, \Delta\mathcal{X}, \Delta P} \left \Vert \begin{bmatrix} \Delta\mathcal{X} & \Delta P \end{bmatrix}\right \Vert_\mathrm{F},
    \end{align*}
    such that $ \mathcal{X}_{1,T} + \Delta \mathcal{X} = \begin{bmatrix}\mathcal{B} & \mathcal{A} \end{bmatrix}(P + \Delta P)$.
\end{subequations}

This problem admits a closed-form solution via the SVD of the augmented matrix $U\Lambda V\t = \begin{bmatrix}P\t & \mathcal{X}_{1,T}\t \end{bmatrix}$. 
Partitioning the matrices $U,V$ as $U=[U_1\ U_2]$, $V=[V_1\ V_2]$, where $U_1, V_1$ contain the first $p$ columns of $U, V$, respectively, and further partitioning $V$ as $\begin{bmatrix} V_{11} & V_{12} \\ V_{21} & V_{22} \end{bmatrix}$, where $V_{11}\in \Re^{p \times p}$, the estimated model is computed as
\begin{align}
    \begin{bmatrix} \hat{\mathcal{B}} & \hat{\mathcal{A}} \end{bmatrix} = (-V_{12} V_{22}^{-1})\t.
\end{align}
Moreover, the values of $\Delta \mathcal{X}, \Delta P$ are computed as
\begin{align}
    \begin{bmatrix}
        \Delta P\t & \Delta \mathcal{X}\t
    \end{bmatrix} = -U_2\Lambda_2 V_2\t.
\end{align}

\subsection{Instrumental Variables Method}

As discussed in Section~\ref{sec:nm_distribution_steering_problem}, $\chi_k$ is correlated with $\zeta_k$. In such cases, where the regressors are correlated with the noise, the TLS estimator will not be consistent, necessitating another approach. 
Here, we apply the Instrumental Variables (IV) method~\cite{wooldridge_econometric_2010}.

\begin{remark}
    Despite the IV estimator being theoretically consistent, it is less data-efficient than the TLS estimator, and it performs poorly both in terms of mean error and error variance when the available sample size is small. Thus, TLS serves as an alternative method for such cases.
\end{remark}

In the IV method, the ``instruments" are a set of variables, independent from $\zeta_k$, that are used as the independent variables of the problem.
An attractive choice of instruments in our case is the control input at time $k$ and the $b$ past control inputs, i.e., $g_k \triangleq [u_{k}\t\ u_{k-1}\t\ \cdots\ u_{k-b}\t]$.
To present the IV method, we consider the parameters $\beta \triangleq \begin{bmatrix} \mathcal{B} & \mathcal{A} \end{bmatrix}$, the regressors $j_k \triangleq [u_k\t\ \chi_k\t]$, the regressor matrix $J \triangleq P\t$, and the observations $Y \triangleq \mathcal{X}_{1,T}\t$. 
%
%
We consider the partition $j_k = \begin{bmatrix} j_{k,1}\ j_{k,2} \end{bmatrix}$, where $j_{k,1}$ are the exogenous components of $j_k$ (i.e., those not correlated with the noise), and $j_{k,2}$ are the endogenous components of $j_k$. 
The goal is to express $j_{k,2}$ as a linear combination of $g_k$ and $j_{k,1}$. To this end, we employ the two-stage least squares (2SLS) estimator, which chooses the linear combinations of $g_k$ with the highest correlation with $j_{k,2}$ \cite{wooldridge_econometric_2010} and is given by
\begin{align}\label{eq:2sls_formula}
    \hat{\beta} = \left [(J\t G(G\t G)^{-1}G\t J)^{-1}(J\t G(G\t G)^{-1}G\t Y) \right]\t,
\end{align}
where $G\triangleq [g_0\ g_1\ \cdots\ g_{T-1}]$ is the matrix of instrument vectors over all samples.
In order for the 2SLS estimator to be consistent, the following assumptions must hold:
\begin{assumption}\label{assum:iv_assum_1}
    The instruments are independent of the noise, $\mathbb{E}(g_k\zeta_k\t) =0$.
\end{assumption}

\begin{assumption}\label{assum:iv_assum_2}
    The instruments are linearly independent, $\mathrm{rank}(\mathbb{E}(g_kg_k\t)) = b + 1$.
\end{assumption}

\begin{assumption}\label{assum:iv_assum_3}
    The instruments are linearly dependent with the regressors to some degree, $\mathrm{rank}(\mathbb{E}(g_kj_k\t)) = r+m$.
\end{assumption}

If the control inputs used for data collection are mutually independent and independent of the noise, then Assumptions~\ref{assum:iv_assum_1},~\ref{assum:iv_assum_2} are satisfied.
A necessary condition for Assumption~\ref{assum:iv_assum_3} to be satisfied is that the number of instruments to be greater than or equal to the number of regressors, but, in practice, this assumption is verified empirically by computing the expectation from sample data.
The following theorem establishes the consistency of the 2SLS estimator under the aforementioned assumptions.
 
\begin{theorem}
    (Theorem 5.1 \cite{wooldridge_econometric_2010}) 
    If Assumptions~\ref{assum:iv_assum_1}, \ref{assum:iv_assum_2}, \ref{assum:iv_assum_3} are satisfied, then the 2SLS estimator is consistent.
\end{theorem}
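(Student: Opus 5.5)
The plan is the standard consistency argument for 2SLS: write the estimation error explicitly and pass to the limit term by term using a law of large numbers together with the continuous mapping theorem. Stacking the data equation \eqref{eq:chi_data_dynamics} in the transposed form used for the IV method gives $Y = J\beta\t + E$, where $E \triangleq \Xi_{0,T}\t = \mathcal{Z}_{0,T}\t\mathcal{D}\t$. Here I take $G$ to be arranged with samples along rows, i.e. $G\in\Re^{T\times m(b+1)}$, so that \eqref{eq:2sls_formula} is dimensionally consistent. Substituting into \eqref{eq:2sls_formula} yields
\begin{align*}
\hat\beta\t - \beta\t = \Big(\tfrac{1}{T}J\t G\big(\tfrac{1}{T}G\t G\big)^{-1}\tfrac{1}{T}G\t J\Big)^{-1}\tfrac{1}{T}J\t G\big(\tfrac{1}{T}G\t G\big)^{-1}\tfrac{1}{T}G\t E .
\end{align*}
It therefore suffices to identify the probability limits of the three sample moments $\frac1T G\t G$, $\frac1T G\t J$ and $\frac1T G\t E$.

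\textbf{Step 1: limits of the sample moments.} I would show $\frac1T G\t G \to \mathbb{E}[g_kg_k\t]$, $\frac1T G\t J \to \mathbb{E}[g_k j_k]$ and $\frac1T G\t E \to \mathbb{E}[g_k\zeta_k\t]\mathcal{D}\t$ in probability. The sequences $g_k$, $j_k$, $\zeta_k$ are not i.i.d.: $\zeta_k$ is the stationary AR(1) process \eqref{eq:zeta_noise_dynamics} with $\rho(\Psi)<1$, and $\chi_k$ is a finite window of past inputs and outputs. They are, however, jointly stationary Gaussian (or at least stationary with exponentially decaying dependence) under i.i.d. excitation inputs. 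So I would invoke an ergodic theorem or weak LLN for stationary processes with summable autocovariances. Chebyshev's inequality applied to each sample average, with variance of order $\frac1T\sum_\tau |\mathrm{cov}(\cdot,\cdot)(\tau)|$, does the job. This step is the main obstacle: the theorem is cited from \cite{wooldridge_econometric_2010} in a random-sampling setting, so I must verify that the time-series dependence here still yields the LLN. My first attempt would be to bound the autocovariances geometrically using $\rho(\Psi)<1$. The finite correlation length of Lemma~\ref{lem:d_noise_correlation} helps as well.

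\textbf{Step 2: invertibility and the limit.} By Assumption~\ref{assum:iv_assum_1}, $\frac1T G\t E\to 0$. By Assumption~\ref{assum:iv_assum_2}, $\Sigma_{gg}\triangleq\mathbb{E}[g_kg_k\t]$ is nonsingular, so $(\frac1T G\t G)^{-1}\to\Sigma_{gg}^{-1}$ by continuity of inversion. By Assumption~\ref{assum:iv_assum_3}, $\Sigma_{gj}\triangleq\mathbb{E}[g_kj_k]$ has full column rank $r+m$, hence $\Sigma_{gj}\t\Sigma_{gg}^{-1}\Sigma_{gj}\succ 0$ and the outer inverse also converges. The continuous mapping theorem (Slutsky) then gives
\begin{align*}
\hat\beta\t-\beta\t \to (\Sigma_{gj}\t\Sigma_{gg}^{-1}\Sigma_{gj})^{-1}\Sigma_{gj}\t\Sigma_{gg}^{-1}\cdot 0 = 0
\end{align*}
in probability, which is consistency.
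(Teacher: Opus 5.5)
Your error decomposition and Step~2 are exactly the textbook argument behind the cited Wooldridge theorem; the paper gives no proof beyond that citation. You are also right that its random-sampling hypothesis is unavailable for a single trajectory. The gap is in how you propose to repair Step~1.

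The condition $\rho(\Psi)<1$ and Lemma~\ref{lem:d_noise_correlation} control only the noise. (The lemma also assumes i.i.d.\ $w,q$, which is a different noise model from the AR(1) you invoke.) By contrast, $\frac1T G\t J$ involves the regressors $\chi_k=Lz_k$, i.e.\ past outputs, and their moments and temporal dependence are governed by $\rho(A)$.

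Assumptions~\ref{assum:iv_assum_1}--\ref{assum:iv_assum_3} are population-moment conditions, and they can all hold with $\rho(A)>1$. For i.i.d.\ zero-mean inputs independent of the noise, $\mathbb{E}[u_{k-i}y_{k-j}\t]$ equals the input covariance times $(CA^{i-j-1}B)\t$ for $i>j$ and is $0$ otherwise, so it is time-invariant regardless of stability. Now take $i\le j$, e.g.\ the products $u_ky_{k-1}\t$ that enter $\frac1T G\t J$ through $\chi_k$. These summands form a martingale difference sequence whose variance grows geometrically once the unstable modes are excited. Hence $\frac1T\sum_k u_ky_{k-1}\t$ does not converge in probability, and Step~1 fails. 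Your claim that the data are stationary with exponentially decaying dependence ``under i.i.d.\ excitation inputs'' is therefore not supported by the stated hypotheses.

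To close the argument you need a hypothesis that plays the role of random sampling; the paper's statement silently omits it too. One sufficient set is:
\begin{itemize}
\item $A$ is Schur stable, or the data are collected under a stabilizing loop independent of the noise;
\item the excitation is i.i.d.\ and independent of $(w,q)$;
\item $w,q$ have absolutely summable autocovariances;
\item $L$ is a fixed matrix.
\end{itemize}
The last point matters because $L=\Lambda_1^{-1}U_1\t$ is computed from the same data. Otherwise $\chi_k$ is not a stationary sequence, and the target $\beta=[\mathcal{B}\ \mathcal{A}]$ itself depends on the sample.

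Under these hypotheses, $(g_k,j_k,\zeta_k)$ is an exponentially stable linear filter of stationary noise, so its autocovariances are absolutely summable. In the Gaussian case, Isserlis' formula transfers this summability to the products $g_kj_k\t$ and $g_k\zeta_k\t$; in general a mixing LLN does the same job. Your Chebyshev argument then goes through.
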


\section{Data-Driven Distribution Steering}


\subsection{Data-driven Mean Steering}

The mean steering problem \eqref{eq:mean_problem_chi_state} is trivially expressed in terms of an estimated model, since the cost \eqref{eq:mean_steering_cost} remains the same, while the dynamics constraints \eqref{eq:mean_dynamics_constraint} are written in terms of $\hat{\mathcal{A}}, \hat{\mathcal{B}}$ instead of $\mathcal{A},\mathcal{B}$.
For the initial and terminal output mean constraints \eqref{eq:init_term_mean_constraints}, we need to estimate the matrix $\mathcal{C}$ in \eqref{eq:chi_dynamics}. 
An estimate, $\hat{\mathcal{C}}$, is obtained using TLS / IV with $\mathcal{X}_{0,T}$ as the regressors and $Y_{0,T}$ as the observations. Considering the instrument matrix $G$ as in \eqref{eq:2sls_formula}, and defining the regressor and observation matrices $J_c \triangleq \mathcal{X}_{0,T}\t$ and $Y_c \triangleq Y_{0,T}\t$, respectively, the IV estimator for $\mathcal{C}$ is given by
\begin{align}\label{eq:2sls_formula_c_matrix}
    \hat{\mathcal{C}} = \left [(J_c\t G(G\t G)^{-1}G\t J_c)^{-1}(J_c\t G(G\t G)^{-1}G\t Y_c) \right]\t.
\end{align}
The initial and terminal mean constraints are now posed as $\mu^y_i = \hat{\mathcal{C}}\mu_0$ and $\mu^y_f = \hat{\mathcal{C}}\mu_N$.

\subsection{Data-driven Covariance Steering}

Using the data matrix $P$ from \eqref{eq:data_matrix_P}, we construct the matrix $\Phi = PP\t$.
For persistently exciting data, Theorem~\ref{lem:WFL} implies that $P$ will have full row rank, and thus, $\Phi$ will have full rank.
Hence, there exists a solution $G_k\in \mathbb{R}^{(m+r) \times r}$ to the following system of linear equations
\begin{align}\label{eq:rouche_capelli}
    \begin{bmatrix}
        K_k & I
    \end{bmatrix}\t =\Phi G_k.
\end{align}

From \eqref{eq:rouche_capelli}, we can express the feedback gains $K_k$ in \eqref{eq:chi_covar_propagation} and \eqref{eq:chi_xi_covar_propagation} in terms of the data matrix $\Phi$ and the decision variables $G_k$. Using \eqref{eq:chi_data_dynamics}, we derive the following model-independent form of the covariance propagation equations
\begin{align}
    \Sigma_{k+1} &= (\mathcal{X}_{1,T} - \mathcal{D}\mathcal{Z}_{0,T}) P\t G_k \Sigma_k G_k\t P(\mathcal{X}_{1,T} - \mathcal{D}\mathcal{Z}_{0,T})\t \notag \\
    &+ (\mathcal{X}_{1,T}-\mathcal{D}\mathcal{Z}_{0,T}) P\t G_k \Sigma^{\chi\zeta}_k \mathcal{D}\t  \notag \\
    &+ \mathcal{D}(\Sigma^{\chi\zeta}_k)\t G_k\t P (\mathcal{X}_{1,T}-\mathcal{D}\mathcal{Z}_{0,T})\t + \mathcal{D} \Sigma^\zeta \mathcal{D}\t, \label{eq:dd_chi_covar_propagation}\\
    \Sigma^{\chi\zeta}_{k+1} &= (\mathcal{X}_{1,T}-\mathcal{D}\mathcal{Z}_{0,T}) P\t G_k \Sigma^{\chi\zeta}_k \Psi\t + \mathcal{D}\Sigma^\zeta \Psi\t. \label{eq:dd_chi_xi_covar_propagation}
\end{align}
In this data-driven formulation, the matrices $\mathcal{A}, \mathcal{B}$ are not parameterized separately, but the whole feedback interconnection is expressed in terms of the data. Thus, we cannot express the term $\mathcal{B}\Sigma^\nu_k \mathcal{B}\t$ in \eqref{eq:chi_covar_propagation} in a model-independent fashion, which is the reason for using the sub-optimal control law $u_k = K_k(\chi_k-\mu_k) + v_k$ in \eqref{eq:dd_chi_covar_propagation} instead of $u_k = K_k(\chi_k-\mu_k) + v_k + \nu_k$.

Next, the output covariance is expressed in terms of the optimization variables as
\begin{align}
    \Sigma^y_k &= \mathbb{E}[(y_k-\mu^y_k)(y_k-\mu^y_k)\t] \notag\\
               &= \mathcal{C}\Sigma_k \mathcal{C}\t + \mathcal{C} \Sigma^{\chi \zeta}_k + (\Sigma^{\chi \zeta}_k)\t\mathcal{C}\t + \Sigma^\zeta. \label{eq:y_covar_model}
\end{align}
See the Appendix for the full derivation. Since $\mathcal{C}$ is unknown, we will use $\hat{\mathcal{C}}$ from~\eqref{eq:2sls_formula_c_matrix} to implement \eqref{eq:y_covar_model}.

Furthermore, from the first block equation of \eqref{eq:rouche_capelli}, the objective function of the optimization is expressed as
\begin{align}\label{eq:dd_covar_problem_chi_state_cost}
    J = \sum_{k=0}^{N-1} &\mathrm{tr}(Q^\Sigma_k\Sigma_k)+ \mathrm{tr}(R^\Sigma_k \Theta G_k \Sigma_k G_k\t \Theta\t),
\end{align}
where $\Theta = \begin{bmatrix} U_{0,T}U_{0,T}\t & U_{0,T}\mathcal{X}_{0,T}\t \end{bmatrix}$.

To implement equations \eqref{eq:dd_chi_covar_propagation}-\eqref{eq:y_covar_model}, the noise realization $\mathcal{Z}_{0,T}$ and noise covariance $\Sigma^\zeta$, the dynamics and noise model matrices $\mathcal{D}, \Psi$, and the initial cross-covariance $\Sigma^{\chi\zeta}_i$ must be estimated.
To this end, we first approximate the realizations of $\xi$ and $\zeta$ from \eqref{eq:chi_data_dynamics} and \eqref{eq:output_data_dynamics} as the residuals between the observations and the predictions of an estimated model.
Then, the covariances of $\xi_k$ and $\zeta_k$ are computed as the sample covariances of the estimated realizations, as follows
\begin{align}
    \hat{\Xi}_{0,T} &= \mathcal{X}_{1,T} - \begin{bmatrix} \hat{\mathcal{B}} & \hat{\mathcal{A}} \end{bmatrix}P,\quad \quad
    \hat{\Sigma}^\xi = \frac{1}{T} \hat{\Xi}_{0,T} \hat{\Xi}_{0,T}\t\\
    \hat{\mathcal{Z}}_{0,T} &= Y_{0,T} - \hat{\mathcal{C}}\mathcal{X}_{0,T},\quad \quad \hat{\Sigma}^\zeta = \frac{1}{T} \hat{\mathcal{Z}}_{0,T} \hat{\mathcal{Z}}_{0,T}\t.
\end{align}
Since the noise process is stationary, it follows that $\hat{\Sigma}^\xi_k \equiv \hat{\Sigma}^\xi$ and $\hat{\Sigma}^\zeta_k \equiv \hat{\Sigma}^\zeta$. 
Additionally,  an estimate $\hat{\Psi}$ of the noise model \eqref{eq:zeta_noise_dynamics} is obtained using TLS with the estimated realizations $\hat{\mathcal{Z}}_{0,T-1} \triangleq \mathcal{H}_{0,1,T-1}(\hat{\zeta})$ as the regressors and $\hat{\mathcal{Z}}_{1,T-1} \triangleq \mathcal{H}_{1,1,T-1}(\hat{\zeta})$ as the observations. Furthermore, using ordinary least squares on the relation $\xi_k=\mathcal{D}\zeta_k$, we compute an approximation, $\hat{\mathcal{D}}$, of $\mathcal{D}$ using $\hat{\mathcal{Z}}_{0,T}$ as the regressors and $\hat{\Xi}_{0,T}$ as the observations. Finally, expressing the output equation of \eqref{eq:chi_dynamics} for the estimated model, multiplying from the right by $\chi_k\t$, and taking expectations, an estimate, $\hat{\Sigma}^{\chi\zeta}_i$, of the initial cross-covariance is obtained as
\begin{align}\label{eq:cross_covar_estimation}
    \hat{\Sigma}^{\chi\zeta}_i = (\mathbb{E}[y_{k}\chi_k\t] - \hat{\mathcal{C}}\mathbb{E}[\chi_k\chi_k\t])\t,
\end{align}
where the expectations are computed as $\mathbb{E}[y_k\chi_k\t] =  Y_{0,T} \mathcal{X}_{0,T}\t/T$ and $\mathbb{E}[\chi_k \chi_k\t] =\mathcal{X}_{0,T} \mathcal{X}_{0,T}\t/T$.

Equations \eqref{eq:dd_chi_covar_propagation}, \eqref{eq:dd_chi_xi_covar_propagation}, and \eqref{eq:dd_covar_problem_chi_state_cost}, yield a non-convex optimization problem due to the terms $G_k\Sigma^{\chi\zeta}_k$ and $G_k \Sigma_k G_k\t$. To alleviate this issue, we define the new decision variables $U_k \triangleq G_k\Sigma_k$ and $W_k \triangleq G_k\Sigma^{\chi\zeta}_k$. The term $G_k \Sigma_k G_k\t$ now becomes $U_k \Sigma^{-1}_k U_k\t$, which is still non-convex. In order to convexify the problem, we introduce the slack variable $Y_k \succeq U_k \Sigma_k^{-1} U_k\t$.

The relaxed optimization problem is finally formulated as
\begin{subequations}\label{eq:convex_dd_covar_problem_chi_state}
\begin{equation}
\begin{aligned}
    \min_{\Sigma_k, \Sigma^{\chi\zeta}_k, U_k, W_k, Y_k} \sum_{k=0}^{N-1} &\mathrm{tr}(Q^\Sigma_k\Sigma_k)+ \mathrm{tr}(R^\Sigma_k \Theta Y_k \Theta\t),
\end{aligned}
\end{equation}
such that, for all $k=0,1,\dots, N-1$,
\begin{align}
    \Sigma_{k+1} &= (\mathcal{X}_{1,T} - \hat{\mathcal{D}}\hat{\mathcal{Z}}_{0,T}) P\t Y_k P(\mathcal{X}_{1,T} - \hat{\mathcal{D}}\hat{\mathcal{Z}}_{0,T})\t \notag \\
    &+ (\mathcal{X}_{1,T}-\hat{\mathcal{D}}\hat{\mathcal{Z}}_{0,T}) P\t W_k \hat{\mathcal{D}}\t  \notag \\
    &+ \hat{\mathcal{D}} W_k\t P (\mathcal{X}_{1,T} - \hat{\mathcal{D}} \hat{\mathcal{Z}}_{0,T})\t + \hat{\mathcal{D}} \hat{\Sigma}^\zeta \hat{\mathcal{D}}\t, \allowdisplaybreaks\\
    \Sigma^{\chi\zeta}_{k+1} &= (\mathcal{X}_{1,T}- \hat{\mathcal{D}} \hat{\mathcal{Z}}_{0,T}) P\t W_k \hat{\Psi}\t + \hat{\mathcal{D}} \hat{\Sigma}^\zeta \hat{\Psi}\t, \allowdisplaybreaks\\
    &\begin{bmatrix}
        \Sigma_k & U_k\t & \Sigma^{\chi\zeta}_k \hat{\mathcal{D}}\t\\
        U_k & Y_k & W_k \hat{\mathcal{D}}\t \\
        \hat{\mathcal{D}}(\Sigma^{\chi\zeta}_k)\t & \hat{\mathcal{D}} W_k\t & \hat{\mathcal{D}} \hat{\Sigma}^\zeta \hat{\mathcal{D}}\t
    \end{bmatrix} \succeq 0 \label{eq:joint_covar_constraint},\allowdisplaybreaks\\
    \Sigma_k &= \begin{bmatrix} \mathcal{X}_{0,T} U_{0,T}\t &  \mathcal{X}_{0,T}\mathcal{X}_{0,T}\t\end{bmatrix} U_k \label{eq:rouche_capelli_second_block_chi_covar},\allowdisplaybreaks\\
    \Sigma^{\chi \zeta}_k &= \begin{bmatrix} \mathcal{X}_{0,T} U_{0,T}\t &  \mathcal{X}_{0,T}\mathcal{X}_{0,T}\t\end{bmatrix} W_k \label{eq:rouche_capelli_second_block_chi_xi_covar},\allowdisplaybreaks\\
    \Sigma^y_f &= \hat{\mathcal{C}}\Sigma_N \hat{\mathcal{C}}\t + \hat{\mathcal{C}} \Sigma^{\chi \zeta}_N + (\Sigma^{\chi \zeta}_N)\t\hat{\mathcal{C}}\t + \hat{\Sigma}^\zeta,\allowdisplaybreaks \\
    \Sigma^y_i &= \hat{\mathcal{C}}\Sigma_0 \hat{\mathcal{C}}\t + \hat{\mathcal{C}} \Sigma^{\chi \zeta}_0 + (\Sigma^{\chi \zeta}_0)\t\hat{\mathcal{C}}\t + \hat{\Sigma}^\zeta,\allowdisplaybreaks \\
    \Sigma^{\chi\zeta}_0 &= \hat{\Sigma}^{\chi\zeta}_i,
\end{align}
\end{subequations}
where the constraints \eqref{eq:rouche_capelli_second_block_chi_covar} and \eqref{eq:rouche_capelli_second_block_chi_xi_covar} come from the second block of \eqref{eq:rouche_capelli} after multiplying by $\Sigma_k$ and $\Sigma_k^{\chi \zeta}$ from the right, respectively. Moreover, the constraint \eqref{eq:joint_covar_constraint} implies that the covariance matrix of the vector $a_k \triangleq [\chi_k\ G_k \chi_k\ \hat{\mathcal{D}}\zeta_k]\t$ must be positive semi-definite.

\begin{remark}\label{rem:non_exact_relax}
In contrast to \cite{rapakoulias_discrete-time_2023}, the relaxation $Y_k \succeq U_k \Sigma_k^{-1} U_k\t$ does not result in a lossless convexification for the present problem. This implies that, if the estimated terms in \eqref{eq:convex_dd_covar_problem_chi_state} were exactly equal to the true ones, then the relaxed terminal constraint $\Sigma^y_N \preceq \Sigma^y_f$ would be met. However, under the presence of estimation errors, 
satisfaction of this terminal condition is not guaranteed.
\end{remark}

\section{Numerical Examples}

This section evaluates the proposed framework through numerical simulations.
We consider the dynamical system 
\begin{equation*}
\begin{aligned}
    &A = 0.55\begin{bmatrix}
        -0.5 & 1.4 & 0.4\\
        -0.9 & 0.3 & -1.5\\
        1.1 & 1.0 & -0.4\\
    \end{bmatrix},\quad B=\begin{bmatrix}
        0.1 & -0.3\\
        -0.1 & -0.7\\
        0.7 & -1\\
    \end{bmatrix},\\
\end{aligned}
\end{equation*}
for which $y=[x_1, x_2]\t$, $\ell=2$, and $p\ell>n$.
The true matrices $\mathcal{A},\mathcal{B}$ were computed via the WFL on noise-free data.

\subsection{Model Estimation}
We quantify model estimation performance using the error metric $e = \left \Vert \begin{bmatrix} \mathcal{B} & \mathcal{A}\end{bmatrix} - \begin{bmatrix}\hat{\mathcal{B}} & \hat{\mathcal{A}}\end{bmatrix} \right\Vert_{\mathrm{F}}$.
The colored noise processes $w_{k+1} = 0.5 I w_k + \alpha_k$ and $q_{k+1} = 0.4 I q_k + \beta_k$ were used, where $\alpha_k \sim \mathcal{N}(0, I_3)$ and $\beta_k \sim \mathcal{N}(0, I_2)$.
The mean and variance of $e$ were computed for sample sizes ranging from 50 up to 5,000 samples, using 200 datasets of each size, and they are presented in Figure~\ref{fig:model_estimation_error}.
In addition to the TLS and IV methods, the ordinary least squares method (LS) was added for comparison.
For small sample sizes, the IV method exhibits comparable performance with LS and TLS, yet as the sample size increases, the performance gap quickly becomes evident.
Moreover, we observe that the performance gap between TLS and LS is relatively small, which is attributed to the fact that the truncated SVD approach used to compute the transformation matrix $L$ already provides some robustness against noise in the data.

\begin{figure}
    \centering
    \includegraphics[width=0.70\linewidth]{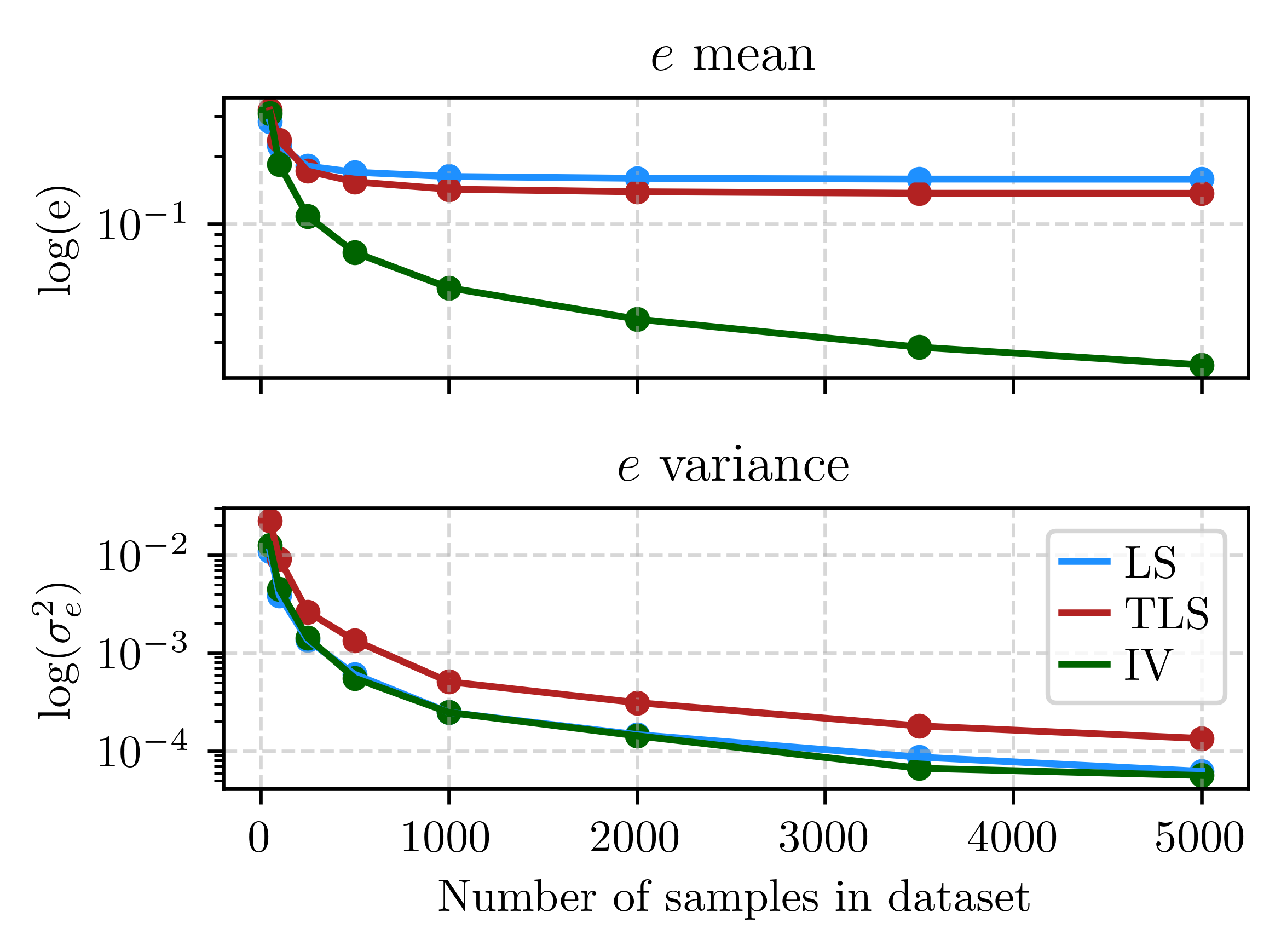}
    \caption{Model estimation error.}
    \label{fig:model_estimation_error}
\end{figure}

\subsection{Density Steering}\label{sec:density_steering}

Next, the algorithm is tested on a density steering example.
The initial output mean and covariance were $\mu_i^y = [0,\ 0]\t$ and $\Sigma_i^y = 2.5I_2$. 
A Lissajous reference trajectory was used with $\mu^y_f=[-5.9, -9.5]\t$ and $\Sigma^y_f=0.1 \Sigma_i^y$. 
The horizon length was $N=15$ steps, and the IV method was used for model and noise estimation.
The colored noise processes $w_{k+1}=0.1Iw_k + \alpha_k$ and $q_{k+1} = 0.1I q_k + \beta_k$ were used, where $\alpha_k \sim \mathcal{N}(0, 0.1^2I_3)$ and $\beta_k \sim \mathcal{N}(0, 0.1^2 I_2)$.
The distribution steering framework was run for various dataset sizes, using 500 datasets of each size.
For each dataset, the terminal output mean and covariance were computed empirically by simulating the controller on 2,000 Monte-Carlo trajectories. 
As a baseline for comparison, we consider the case of known noise in the collected data.

The mean steering performance is quantified by the terminal mean error norm $\mu_e=\Vert 
\mu^y_f - \mu^y_N \Vert$ and the covariance steering performance by the maximum eigenvalue of the difference between the desired and the achieved terminal covariance $\Sigma_{e} = \Sigma_f^y-\Sigma_N^y$.
The mean and covariance steering error metrics are presented in Figure~\ref{fig:ms_cs_error_metrics} and the Monte-Carlo trajectories are presented in Figure~\ref{fig:dd_direct_phase_plane}, where the blue line represents the optimal mean trajectory.
One notices an improvement in the distribution steering performance as the dataset size increases.
Finally, a behavior consistent with Remark~\ref{rem:non_exact_relax} is observed.
In particular, $\lambda_{\max}(\Sigma_e)$ is positive for both exact and estimated noise, but smaller when the noise is estimated.
As the dataset size increases, the mean error decreases and part of the robustness margin lost to noise estimation is recovered.

\begin{figure}
    \centering
    \includegraphics[width=0.70\linewidth]{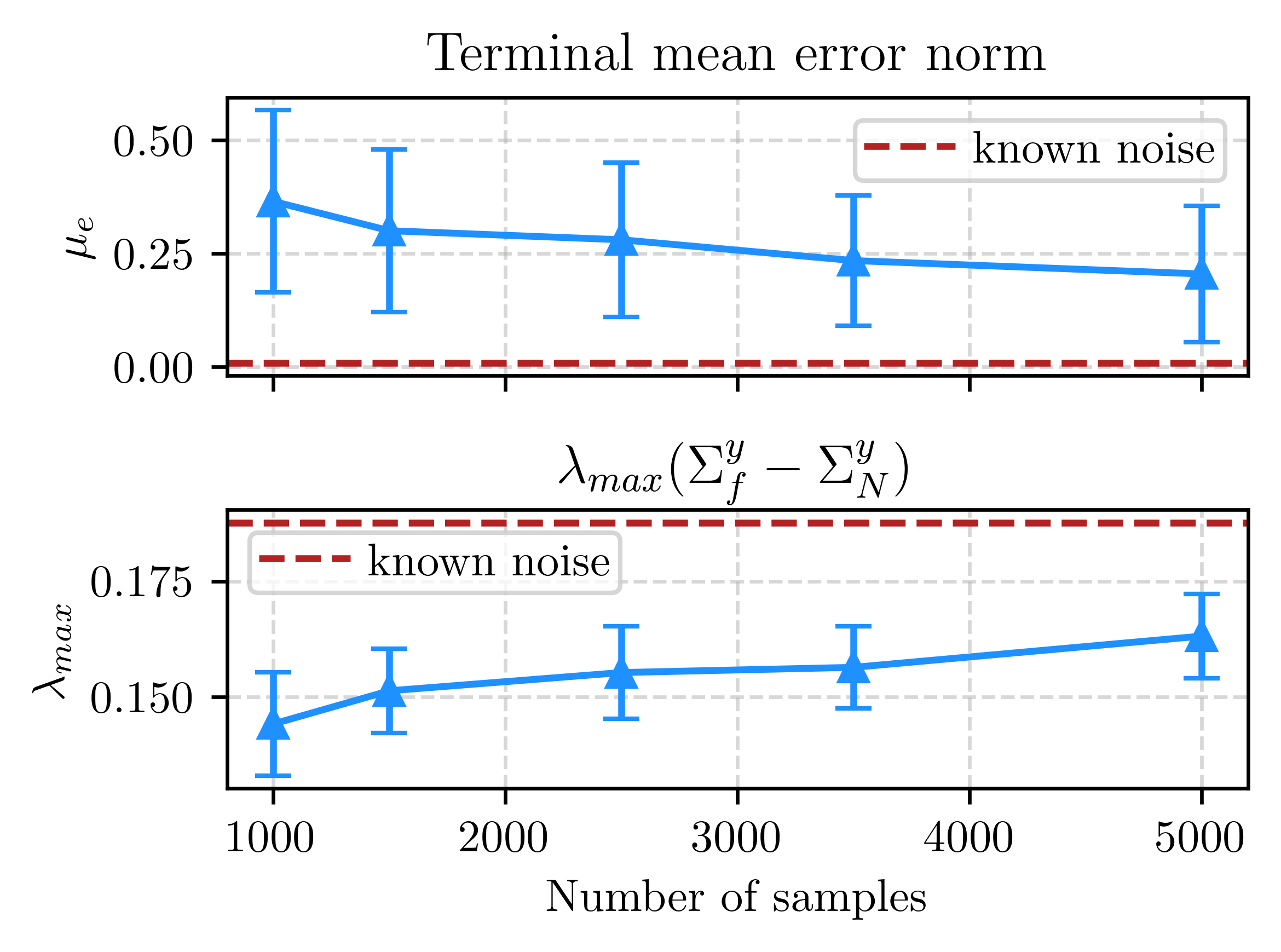}
    \caption{Terminal mean and covariance error.}
    \label{fig:ms_cs_error_metrics}
\end{figure}

\begin{figure}
    \centering
    \includegraphics[width=0.75\linewidth]{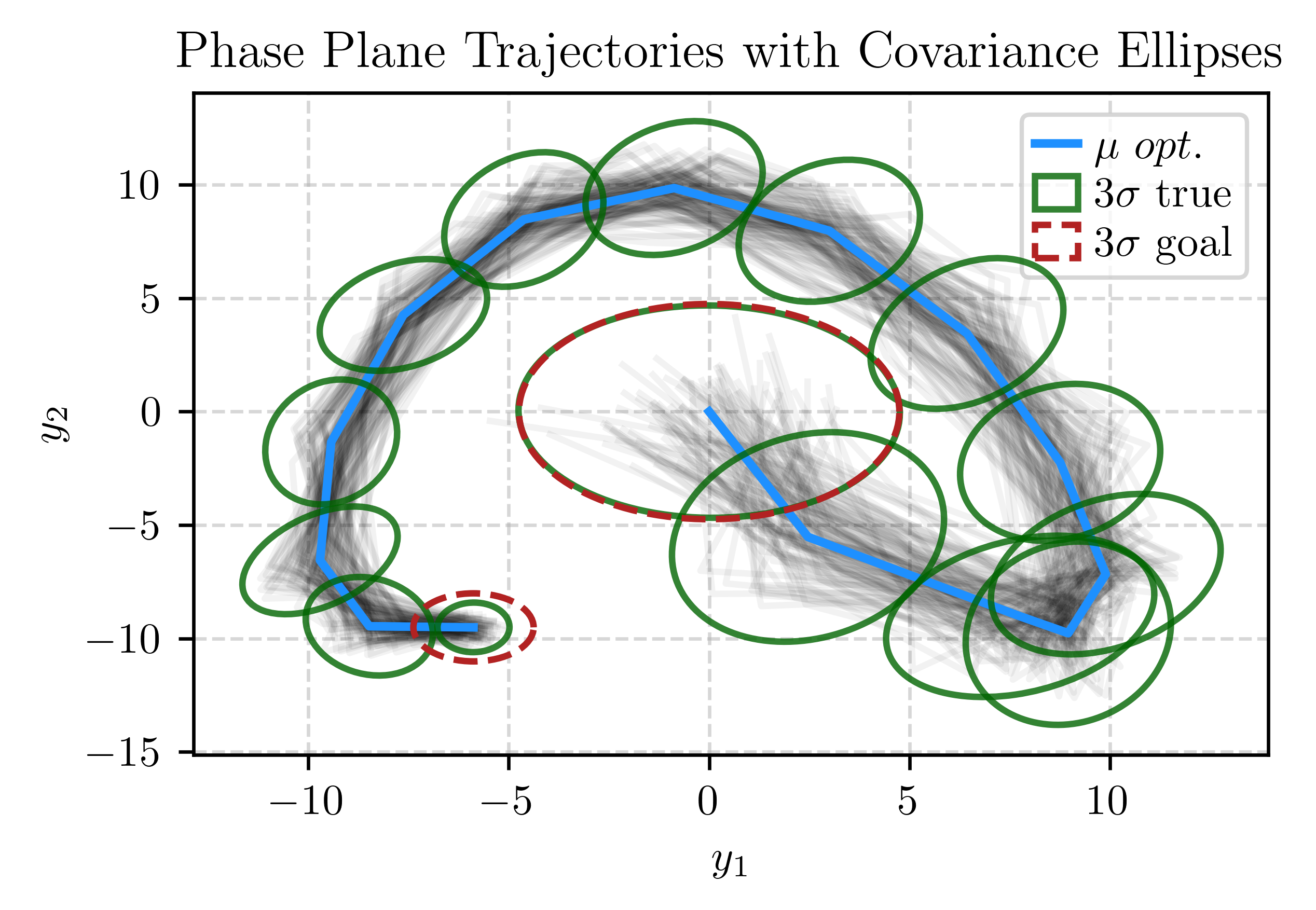}
    \caption{Output trajectories in phase plane.}
    \label{fig:dd_direct_phase_plane}
\end{figure}

\subsection{Comparison with a Model-Based Controller}
In this section, we compare our data driven framework with a model-based, output-feedback CS algorithm based on a Kalman filter \cite{pilipovsky_computationally_2024}.
In contrast to the proposed framework, the model based algorithm \cite{pilipovsky_computationally_2024} does not support colored process and measurement noise.
Therefore, we perform the comparison for i.i.d. noise processes $w,q$ in \eqref{eq:x_dynamics}, with $w_k \sim \mathcal{N}(0, 0.01 I)$ and $q_k \sim \mathcal{N}(0, 0.01I)$.
The rest of the experiment conditions are the same with Section~\ref{sec:density_steering}.

Figure~\ref{fig:model_based_comparison} presents the Monte-Carlo trajectories for the model-based controller.
As expected, the model-based controller achieves zero terminal mean error and it is not conservative with respect to covariance steering.

\begin{figure}
    \centering
    \includegraphics[width=0.75\linewidth]{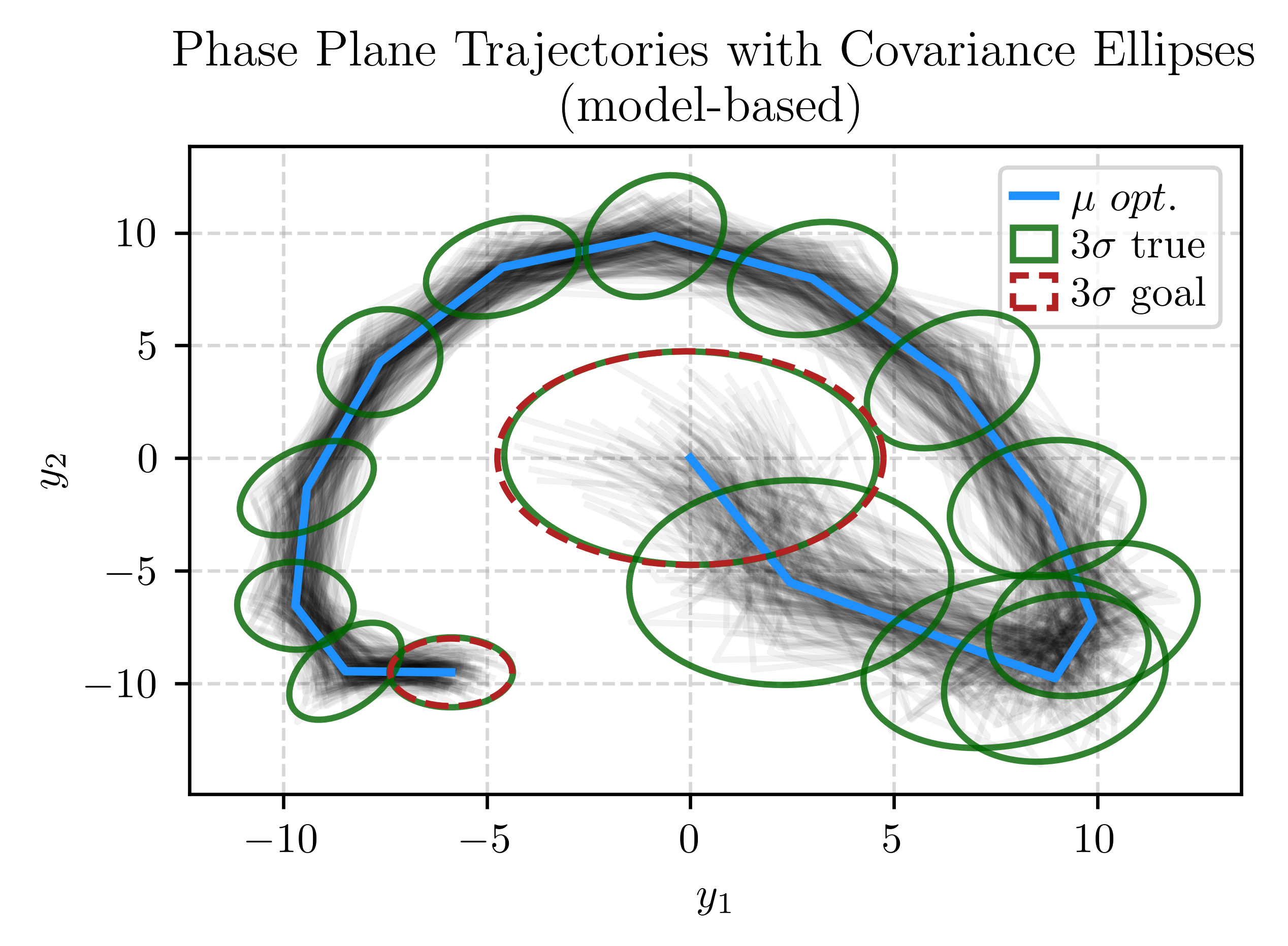}
    \caption{Output trajectories in phase plane (model-based).}
    \label{fig:model_based_comparison}
\end{figure}

\subsection{Computational Scalability Analysis}

Regarding the dimension $r$ of $\chi$, we can obtain an upper bound, independent of the observability index, as $r \le m(n+1)$.
To derive this bound, we exploit the relations $\ell \le n$ and $r=m\ell + n$.
For increasing $r$ and $N$, Table~\ref{tab:runtime_scaling} presents the number of decision variables and average solution times over five runs performed on a MacBook Air M4 with a maximum frequency of $4.4$~GHz and $16$~GB of RAM.

\begin{table}[ht]
\centering
\caption{Computational scalability analysis.}
\label{tab:runtime_scaling}
\small
\begin{tabular}{ccc|ccc}
\toprule
\multicolumn{3}{c|}{Steering horizon $N$ ($r=12$)} &
\multicolumn{3}{c}{Dimension $r$ ($N=15$)} \\
\cmidrule(r){1-3}\cmidrule(l){4-6}
$N$ & Vars & Time (s) &
$r$ & Vars & Time (s) \\
\midrule
5  & 2,800  & 0.832 & 4  & 1,440  & 0.840 \\
10 & 5,600  & 2.489 & 8  & 4,200  & 2.398 \\
20 &11,200  & 8.501 & 20 &21,120  &14.592 \\
30 &16,800  &17.684 & 28 &39,600  &33.714 \\
\bottomrule
\end{tabular}
\end{table}

\section{Conclusions and Future Work}

This paper presents a data-driven covariance steering framework for discrete-time linear systems with noisy, partial state measurements and temporally correlated process and measurement noise.
Several limitations motivate future work, including robust formulations that account for estimation uncertainty, as in \cite{pilipovsky_data_driven_2024}; extensions to non-stationary noise through online data collection and recursive system identification; and distributionally robust formulations for non-Gaussian noise.
Extending these principles to nonlinear systems is another important direction, potentially implementable through nonlinear system identification algorithms combined with iterative covariance steering.

\section{Acknowledgments}

The authors would like to thank Dr. Joshua Pilipovsky and George Rapakoulias for their constructive discussions.
Financial support for this work has been provided by  ONR award N00014-23-12353 and NSF award CNS-2219755.


\bibliographystyle{ieeetr}

\bibliography{references}


\section{APPENDIX}

\subsection{Non-minimal state dynamics matrices}

We denote by $\mathrm{Toep}(a,b)$ a Toeplitz matrix where
$a$ and $b$ are either vectors or stacked matrices of suitable dimensions.
We define the matrices
\begin{align*}
    \mathcal{C}_x &\triangleq[
        B,\ AB,\ \cdots,\ A^{\ell-1}B],\ \
    \mathcal{C}_w \triangleq [
        I,\ A,\ \cdots, \ A^{\ell-1}], \\
    \mathcal{T} &= \mathrm{Toep}(0, [0,\ CB, \ CAB, \ \cdots, \ CA^{\ell-2}B])\\
    \mathcal{T}_w &= \mathrm{Toep}(0, [0, \ C, \ CA, \ \cdots, \ CA^{\ell-2}])\\
    \mathcal{S} &\triangleq \begin{bmatrix}C(\mathcal{C}_x-A^{\ell}\mathcal{O}^\dagger\mathcal{T}) & CA^{\ell}\mathcal{O}^\dagger\end{bmatrix}.
\end{align*}

Then, the dynamics matrices $A_z, B_z, D_z$ in \eqref{eq:z_dynamics} are given in \eqref{eq:matrices}, where $\mathcal{S}_i$ ($i=1,\ldots, 2\ell$) is the $i$-th column of 
$\mathcal{S}$~\cite{zhao_direct_2025}. 

\begin{figure*}[!t]
\begin{equation} \label{eq:matrices}
    A_z = \begin{bmatrix}
        0 & 0 & \cdots & 0 & 0 & 0 & 0 & \cdots & 0 & 0 \\
        I_m & 0 & \cdots & 0 & 0 & 0 & 0 & \cdots & 0 & 0 \\
        0 & I_m & \cdots & 0 & 0 & 0 & 0 & \cdots & 0 & 0 \\
        \vdots & \vdots & \ddots & \vdots & \vdots & \vdots & \vdots & \ddots & \vdots & \vdots \\
        0 & 0 & \cdots & I_m & 0 & 0 & 0 & \cdots & 0 & 0 \\
        \hline \\
        \mathcal{S}_1 & \mathcal{S}_2 & \cdots & \mathcal{S}_{\ell-1} & \mathcal{S}_{\ell} & \mathcal{S}_{\ell+1} & \mathcal{S}_{\ell+2} & \cdots & \mathcal{S}_{2\ell-1} & \mathcal{S}_{2\ell} \\
        0 & 0 & \cdots & 0 & 0 & I_p & 0 & \cdots & 0 & 0 \\
        0 & 0 & \cdots & 0 & 0 & 0 & I_p & \cdots & 0 & 0 \\
        \vdots & \vdots & \ddots & \vdots & \vdots & \vdots & \vdots & \ddots & \vdots & \vdots \\
        0 & 0 & \cdots & 0 & 0 & 0 & 0 & \cdots & I_p & 0 \\
    \end{bmatrix},
\quad
B_z = 
    \begin{bmatrix}
        I_m \\ 0 \\ 0 \\ \vdots \\ 0 \\ \hline \\ 0 \\ 0 \\ 0 \\ \vdots \\ 0
    \end{bmatrix},
\quad
D_z = 
     \begin{bmatrix}
        0 \\ 0 \\ 0 \\ \vdots \\ 0 \\ \hline \\ I_p \\ 0 \\ 0 \\ \vdots \\ 0
    \end{bmatrix}.
\end{equation}
\end{figure*}

\subsection{Proof of Lemma~\ref{lem:d_noise_correlation}}\label{sec:appendix_colored_noise_proof}
    The zero-mean, Gaussian nature of $\tilde{\zeta}_k$ is evident from the fact that $\tilde{\zeta}_k$ is a linear combination of zero-mean Gaussian random variables, namely, $\tilde{w}_k,\tilde{q}_k$, and $q_k$.
    To show that the samples of the noise process $\tilde{\zeta}_k$ are correlated in time, we compute $\mathbb{E}[\tilde{\zeta}_k \tilde{\zeta}_m\t]$ assuming, without loss of generality, that $m > k$. The expectation  $\mathbb{E}[\tilde{\zeta}_k \tilde{\zeta}_m\t]$ is expanded as
\begin{align*}
    \mathbb{E}[\tilde{\zeta}_k \tilde{\zeta}_m\t] 
    &= \mathbb{E}[F_w \tilde{w}_k \tilde{w}_m\t F_w\t 
    + F_w \tilde{w}_k \tilde{q}_m\t F_q\t 
    + F_w \tilde{w}_k q_m\t \\
    &+ F_q \tilde{q}_k \tilde{w}_m\t F_w\t 
    + F_q \tilde{q}_k \tilde{q}_m\t F_q\t 
    + F_q \tilde{q}_k q_m\t \\
    &+ q_k \tilde{w}_m\t F_w\t 
    + q_k \tilde{q}_m\t F_q\t 
    + q_k q_m\t].
\end{align*}
Notice that since $w_k$ is independent of $q_m$ for all times, it follows that $\mathbb{E}[\tilde{w}_k\tilde{q}_m\t + \tilde{w}_kq_m\t + \tilde{q}_k \tilde{w}_m\t + q_k\tilde{w}_m\t] = 0$.
Moreover, since we assumed $m>k$, it holds that $\mathbb{E}[\tilde{q}_k q_m\t] = 0$.
Therefore, the expression for $\mathbb{E}[\tilde{\zeta}_k \tilde{\zeta}_m\t]$ becomes
\begin{align*}
     \mathbb{E}[\tilde{\zeta}_k \tilde{\zeta}_m\t] &= \mathbb{E}[F_w\tilde{w}_k\tilde{w}_m\t F_w\t + F_q\tilde{q}_k\tilde{q}_m\t F_q\t  + q_k\tilde{q}_m\t F_q\t + q_kq_m\t].
\end{align*}

For $m\ne k$, and under the i.i.d. assumption for $q_k$, it holds that $\mathbb{E}[q_kq_m\t] = 0$ and, similarly, $\mathbb{E}[w_kw_m\t]=0$. Recalling that $\tilde{w}_k,\tilde{q}_k$ contain the elements of $w$ and $q$ from time $k-1$ to time $k-\ell$, one sees that for $k-1 \ge m-\ell$, the matrices $\tilde{w}_k\tilde{w}_m\t$ and $\tilde{q}_k\tilde{q}_m\t$ contain elements of the form $w_iw_i\t$ and $q_iq_i\t$, respectively, whose expected value might not be zero.
Moreover, for $m-k \le \ell$, the vector $\tilde{q}_m$ includes $q_k$, and thus, the matrix $q_k\tilde{q}_m\t$ contains elements of the form $q_kq_k\t$, which can have non-zero expected value.

Therefore, for $|m-k| \le \ell$, there exist $k,m$ with $k\ne m,$ such that $\mathbb{E}[\tilde{\zeta}_k \tilde{\zeta}_m\t] \ne 0$.
For $|m-k| > \ell$, the vectors $\tilde{\zeta}_k$ and $\tilde{\zeta}_m$ are independent.\qed

\subsection{Covariance Propagation Equations}
The covariance $\Sigma_{k+1} = \mathbb{E}[(\chi_{k+1} - \mu_{k+1})(\chi_{k+1}-\mu_{k+1})\t]$ can be expanded as follows
\begin{equation*}
\begin{aligned}
        &\Sigma_{k+1} = \mathbb{E}[(\chi_{k+1}-\mu_{k+1})(\chi_{k+1}-\mu_{k+1})\t] \\
        &=\mathbb{E}[(\mathcal{A}\chi_k+ \mathcal{B}u_k + \xi_k)(\mathcal{A}\chi_k+ \mathcal{B}u_k + \xi_k)\t] -\mu_{k+1}\mu_{k+1}\t \\
        &=\mathbb{E}[\mathcal{A}\chi_k\chi_k\t\mathcal{A}\t + \mathcal{A}\chi_k(\chi_k-\mu_k)\t K_k\t\mathcal{B}\t + \mathcal{A}\chi_kv_k\t\mathcal{B}\t\\
        &+\mathcal{A}\chi_k\nu_k\t B\t + \mathcal{A}\chi_k\xi_k\t + \mathcal{B}K(\chi_k-\mu_k)\chi_k\t\mathcal{A}\t\\
        &+\mathcal{B}K(\chi_k-\mu_k)(\chi_k-\mu_k)\t K\t\mathcal{B}\t + \mathcal{B}K(\chi_k-\mu_k)v_k\t\mathcal{B}\t\\
        &+\mathcal{B}K(\chi_k - \mu_k)\nu_k\t\mathcal{B}\t + \mathcal{B}K(\chi_k - \mu_k)\xi_k\t + \mathcal{B}v_k\chi_k\t\mathcal{A}\t\\
        &+\mathcal{B}v_k(\chi_k-\mu_k)\t K\t\mathcal{B}\t + \mathcal{B}v_kv_k\t\mathcal{B}\t + \mathcal{B}v_k\nu_k\t\mathcal{B}\t\\
        &+\mathcal{B}v_k\xi_k\t + \mathcal{B}\nu_k\chi_k\t\mathcal{A}\t + \mathcal{B}\nu_k(\chi_k-\mu_k)\t K\t\mathcal{B}\t\\
        &+\mathcal{B}\nu_kv_k\t\mathcal{B}\t + \mathcal{B}\nu_k\nu_k\t\mathcal{B}\t + \mathcal{B}\nu_k\xi_k\t + \xi_k\chi_k\t\mathcal{A}\t + \xi_kv_k\t\mathcal{B}\t\\
        &+\xi_k(\chi_k-\mu_k)\t K\t\mathcal{B}\t + \xi_k\nu_k\t\mathcal{B}\t + \xi_k\xi_k\t\\
        &-\mathcal{A}\mu_k\mu_k\t\mathcal{A}\t -\mathcal{A}\mu_kv_k\t\mathcal{B}\t -\mathcal{B}v_k\mu_k\t\mathcal{A}\t -\mathcal{B}v_kv_k\t\mathcal{B}\t].
\end{aligned}
\end{equation*}
By nulling the expectations of products of independent variables one of which is zero-mean, and by noting that $\mathbb{E}[x_k\mu_k\t] = \mu_k\mu_k\t$, the covariance propagation equation becomes
\begin{equation*}
    \begin{aligned}
        \Sigma_{k+1} &= \mathcal{A}\Sigma_k\mathcal{A}\t + \mathcal{A}\Sigma_kK_k\t\mathcal{B}\t + \mathcal{A}\Sigma^{\chi\xi}_k + \mathcal{B}K_k\Sigma_k\mathcal{A}\t \\
        &+ \mathcal{B}K_k\Sigma_kK_k\t\mathcal{B}\t + \mathcal{B}K_k\Sigma^{\chi\xi}_k +(\Sigma^{\chi\xi}_k)\t\mathcal{A}\t\\
        &+(\Sigma^{\chi\xi}_k)\t K_k\t\mathcal{B}\t + \Sigma^\xi,
    \end{aligned}
\end{equation*}
where $\Sigma^\xi_k \triangleq \mathbb{E}[\xi_k\xi_k\t] \equiv \Sigma^\xi$ and $\Sigma^{\chi \xi}_k \triangleq \mathbb{E}[\chi_k\xi_k\t] = \mathbb{E}[\chi_k \zeta_k\t] \mathcal{D}\t$. 
Since the noise is zero-mean, the second non-central cross-moment between $\chi_k$ and $\xi_k$, as well as the second moment of $\xi_k$, are the same as the central ones. Defining $\Sigma^{\chi \zeta}_k \triangleq \mathbb{E}[\chi_k \zeta_k\t]$ and $\Sigma^\zeta_k \triangleq \mathbb{E}[\zeta_k \zeta_k\t] \equiv \Sigma^\zeta$, and substituting $\Sigma^{\chi \xi}_k = \Sigma^{\chi \zeta}_k \mathcal{D}\t$ and $\Sigma^\xi = \mathcal{D} \Sigma^\zeta \mathcal{D}\t$ we get \eqref{eq:chi_covar_propagation}.

Considering the noise model \eqref{eq:zeta_noise_dynamics}, the cross-covariance propagation equation can be derived as follows
\begin{equation*}
    \begin{aligned}
        \Sigma^{\chi\zeta}_{k+1} &= \mathbb{E}[\chi_{k+1}\zeta_{k+1}\t]  \\
        &=\mathbb{E}[((\mathcal{A}\chi_k + \mathcal{B} u_k + \mathcal{D}\zeta_k)(\Psi\zeta_k + \eta_k)\t] \\
        &=\mathbb{E}[\mathcal{A}\chi_k\zeta_k\t\Psi\t + \mathcal{A}\chi_k\eta_k\t + \mathcal{B}v_k\zeta_k\t\Psi\t + \mathcal{B}v_k\eta_k\t\\
        &+\mathcal{B}\nu_k\zeta_k\t\Psi\t + \mathcal{B}\nu_k\eta_k\t + \mathcal{B}K\chi_k\zeta_k\t\Psi\t + \mathcal{B}K\chi_k\eta_k\t\\
        &-\mathcal{B}K\mu_k\zeta_k\t\Psi\t -\mathcal{B}K\mu_k\eta_k\t + \mathcal{D}\zeta_k\zeta_k\t \Psi\t + \xi_k\eta_k\t] \\
        &=(\mathcal{A} + \mathcal{B}K_k)\Sigma^{\chi\zeta}_k\Psi\t + \mathcal{D}\Sigma^\zeta \Psi\t.
    \end{aligned}
\end{equation*}

Finally, the output covariance $\Sigma^y_k$ is computed as follows
\begin{equation*}
    \begin{aligned}
        \Sigma^y_k &= \mathbb{E}[(y_k-\mu^y_k)(y_k-\mu^y_k)\t]\\
     &= \mathbb{E}[(\mathcal{C}\chi_k +\zeta_k)(\mathcal{C}\chi_k+\zeta_k)\t] - \mu^y_k(\mu^y_k)\t \\
      &= \mathbb{E}[\mathcal{C}\chi_k \chi_k\t\mathcal{C}\t + \mathcal{C}\chi_k \zeta_k\t + \zeta_k \chi_k\t \mathcal{C}\t + \zeta_k\zeta_k\t] - \mathcal{C}\mu_k \mu_k\t \mathcal{C}\t \\
   &= \mathcal{C}\Sigma_k \mathcal{C}\t + \mathcal{C} \Sigma^{\chi \zeta}_k + (\Sigma^{\chi \zeta}_k)\t\mathcal{C}\t + \Sigma_\zeta.
    \end{aligned}
\end{equation*}


\end{document}